\newtheorem{theorem}{Theorem}
\newtheorem{definition}[theorem]{Definition}
\newcommand{\pFq}[2]{\ensuremath{\!\,_{#1}F_{#2}}}
\newcommand{\bigpFq}[5]{\ensuremath{\pFq{#1}{#2}\left(\begin{matrix}#3 \\ #4 \end{matrix} \;\; \vline  \;\; #5\right)}}
\newcommand   \OO     {O}
\newcommand   \OOsoft {\tilde O}
\newcommand   \M      {\mathsf{M}}
\newcommand   \MZ     {\mathsf{M}_{\mathbb{Z}}}
\definecolor{dblackcolor}{rgb}{0.0,0.0,0.0}
\definecolor{dbluecolor}{rgb}{0.01,0.02,0.7}
\definecolor{dgreencolor}{rgb}{0.2,0.4,0.0}
\definecolor{dgraycolor}{rgb}{0.30,0.3,0.30}
\algrenewcommand\algorithmicrequire{\textbf{Precondition:}}
\algrenewcommand\algorithmicensure{\textbf{Postcondition:}}
\renewcommand{\algorithmicrequire}{\textbf{Input:}}
\renewcommand{\algorithmicensure}{\textbf{Output:}}
\begin{document}

\title{Evaluating parametric holonomic sequences using rectangular splitting}

\numberofauthors{1}

\author{%
 \alignauthor Fredrik Johansson\titlenote{Supported by the Austrian Science Fund (FWF) grant Y464-N18.}\\[\medskipamount]
      \affaddr{\strut RISC}\\
      \affaddr{\strut Johannes Kepler University}\\
      \affaddr{\strut 4040 Linz, Austria}\\[\smallskipamount]
      \email{\strut fredrik.johansson@risc.jku.at}
}

\maketitle

\begin{abstract}
We adapt the rectangular splitting technique of
Paterson and Stockmeyer to the problem of evaluating
terms in holonomic sequences that depend on a parameter.
This approach allows computing the $n$-th term
in a recurrent sequence of suitable type using $O(n^{1/2})$ ``expensive'' operations
at the cost of an increased number of ``cheap'' operations.

Rectangular splitting has little overhead and can perform better
than either naive evaluation or asymptotically faster algorithms
for ranges of $n$ encountered in applications.
As an example,
fast numerical evaluation of the gamma function is investigated.
Our work generalizes two previous algorithms of Smith.
\end{abstract}

\category{I.1.2}{Computing Methodologies}{Symbolic and Algebraic Manipulation}[Algorithms]
\category{F.2.1}{Theory of Computation}{Analysis of Algorithms and Problem Complexity}[Numerical Algorithms and Problems]

\terms{Algorithms}

\keywords{Linearly recurrent sequences, Numerical evaluation, Fast arithmetic, Hypergeometric functions, Gamma function}

\overfullrule=5pt

\section{Introduction}

A sequence $(c(i))_{i=0}^{\infty}$ is called \emph{holonomic}
(or \emph{P-finite}) of order $r$ if it satisfies a linear recurrence equation
\begin{equation}
a_r(i) c(i+r) + a_{r-1}(i) c(i+r-1) + \ldots + a_0(i) c(i) = 0
\label{eq:holeq}
\end{equation}
where $a_0, \ldots, a_r$ are polynomials.
The class of holonomic sequences enjoys many useful closure properties:
for example, holonomic sequences form a ring, and if $c(i)$ is holonomic
then so is the sequence of partial sums $s(n) = \sum_{i=0}^n c(i).$
A~sequence is called \emph{hypergeometric} if it is holonomic of
order $r = 1$. The sequence of partial sums of a hypergeometric sequence
is holonomic of  order (at most) $r = 2$.

Many integer and polynomial sequences of interest in number theory and
combinatorics are holonomic, and the power series expansions of many well-known
special functions (such as the error function and Bessel functions)
are holonomic.

We are interested in efficient algorithms for evaluating an isolated term
$c(n)$ in a holonomic sequence when $n$ is large.
Section \ref{sect:matrix} recalls the well-known
techniques of rewriting \eqref{eq:holeq} in matrix form and applying
\emph{binary splitting}, which gives a near-optimal asymptotic speedup
for certain types of coefficients,
or \emph{fast multipoint evaluation} which
in the general case is the asymptotically fastest known algorithm.

In section \ref{sect:rectangular}, we give an algorithm
(Algorithm~\ref{alg:rsshift}) which becomes
efficient when the recurrence equation involves an ``expensive''
parameter (in a sense which is made precise),
based on the baby-step giant-step technique
of Paterson and Stockmeyer \cite{PatersonStockmeyer1973}
(called \emph{rectangular splitting} in \cite{mca}).

Our algorithm can be viewed as a generalization of the method given by Smith in
\cite{Smith2001} for computing rising factorials. Conceptually, it also
generalizes an algorithm given by Smith in \cite{Smith1989} for evaluation
of hypergeometric series. Our contribution is to
recognize that rectangular splitting can be applied
systematically to a very general class of sequences,
and in an efficient way (we provide a detailed cost analysis,
noting that some care is required in the construction
of the algorithm to get optimal performance).

The main intended application of rectangular splitting is
high-precision numerical evaluation
of special functions, where the parameter is a real or complex number
(represented by a floating-point approximation),
as discussed further in section \ref{sect:numerical}.
Although rectangular splitting is asymptotically slower than
fast multipoint evaluation, it is competitive in practice.
In section \ref{sect:casestudy}, we present implementation
results comparing several different algorithms for
numerical evaluation of the gamma function
to very high precision.

\section{Matrix algorithms} \label{sect:matrix}

Let $R$ be a commutative ring with unity and of sufficiently
large characteristic where necessary. 
Consider a sequence of rank-$r$ vectors $(c(i) = (c_1(i), \ldots, c_r(i))^T)_{i=0}^{\infty}$
satisfying a recurrence equation of the form
\begin{equation}
\begin{pmatrix}
c_1(i+1) \\ \vdots \\ c_r(i+1)
\end{pmatrix}
= M(i)
\begin{pmatrix}
c_1(i) \\ \vdots \\ c_r(i)
\end{pmatrix}
\label{eq:matrec}
\end{equation}
where $M \in R[k]^{r \times r}$ (or $\operatorname{Quot}(R)(k)^{r \times r}$)
and where $M(i)$ denotes entrywise
evaluation. Given an initial vector $c(0)$, we wish to evaluate
the single vector $c(n)$ for some $n > 0$, where we assume that no
denominator in $M$ vanishes for $0 \le i < n$.
A scalar recurrence of the form \eqref{eq:holeq}
can be rewritten as \eqref{eq:matrec} by taking the vector
to be $\tilde c(i) = (c(i), \ldots, c(i+r-1))^T$ and setting $M$
to the companion matrix
\begin{equation}
M = \frac{1}{a_r}
\begin{pmatrix}
       &      a_r &   &   \\
       &    & \ddots &    \\
       &      &         & a_r \\
-a_0   & -a_1 & \ldots & -a_{r-1}
\end{pmatrix}.
\label{eq:companion}
\end{equation}
%
In either case, we call the sequence holonomic (of order $r$).

Through repeated application of the recurrence equation,
$c(n)$ can be evaluated using $O(r^2 n)$ arithmetic operations
(or $O(r n)$ if $M$ is companion) and temporary storage of $O(r)$ values.
We call this strategy the \emph{naive algorithm}.

The naive algorithm is not generally
optimal for large $n$. The idea
behind faster algorithms is to first compute the matrix product
\begin{equation}
P(0,n) = \prod_{i=0}^{n-1} M(i).
\label{eq:matrixprod}
\end{equation}
and then multiply it by the vector of initial values
(matrix multiplication is of course noncommutative,
and throughout this paper the notation in \eqref{eq:matrixprod} is understood to mean
$M(n-1) \ldots M(2) M(1) M(0)$).
This increases the cost to $O(r^{\omega} n)$
arithmetic operations where $\omega$ is the exponent of matrix
multiplication, but we can save time for large $n$ by
exploiting the structure of the matrix product.
The improvement is most dramatic when all matrix entries are constant,
allowing binary exponentiation (with $O(\log n)$ complexity)
or diagonalization to be used,
although this is a rather special case.
We assume in the
remainder of this work that $r$ is fixed,
and omit $O(r^{\omega})$ factors from
any complexity estimates.

From this point, we may view the problem
as that of evaluating
\eqref{eq:matrixprod} for some $M \in R[k]^{r\times r}$.
It is not a restriction to demand
that the entries of $M$ are polynomials: if $M = \tilde M / q$,
we can write $P(0,n) = \tilde P(0,n) / Q(0,n)$ where
$\tilde P(0,n) = \prod_{i=0}^{n-1} q(i) \tilde M(i)$ and
$Q(0,n) = \prod_{i=0}^{n-1} q(i)$. This reduces the
problem to evaluating two denominator-free products,
where the second product has order 1.

\subsection{Binary splitting}

In the \emph{binary splitting} algorithm, we recursively compute a
product of square matrices
$P(a,b) = \prod_{i=a}^{b-1} M(i)$ (where the
entries of $M$ need not necessarily be polynomials of $i$), 
as
$P(m,b) P(a,m)$ where ${m = \lfloor (a+b) / 2 \rfloor}$.
If the entries of partial products grow in size,
this scheme balances the sizes of the subproducts in a way
that allows us to take advantage of fast multiplication.

For example, take $M(i) \in R[x]^{r\times r}$ where all $M(i)$
have bounded degree.
Then $P(a,b)$ has entries in $R[x]$ of
degree $O(b-a)$, and binary splitting can be shown to compute
$P(0,n)$ using $O(\M(n) \log n)$ operations in $R$ where
$\M(n)$ is the complexity of polynomial multiplication,
using $O(n)$ extra storage.
Over a general ring $R$, we have
$\M(n) = O(n \log^{1+o(1)} n)$ by the result of \cite{CantorKaltofen1991},
making the binary splitting softly optimal.
This is a significant improvement over the naive algorithm, which in general
uses $O(n^2)$ coefficient operations to generate the $n$-th entry in a
holonomic sequence of polynomials.

Analogously, binary splitting reduces
the bit complexity for evaluating holonomic sequences over $\mathbb{Z}$ or
$\mathbb{Q}$ (or more generally the algebraic numbers) from
$O(n^{2+o(1)})$ to $O(n^{1+o(1)})$. For further references and
several applications of the binary splitting technique,
we refer to Bernstein \cite{Bernstein2008}.

\subsection{Fast multipoint evaluation}

The \emph{fast multipoint evaluation} method is useful when all
arithmetic operations are assumed to have uniform cost. Fast
multipoint evaluation allows evaluating a polynomial of
degree $d$ simultaneously at $d$ points using $O(\textsf{M}(d) \log d)$
operations and $O(d \log d)$ space.
Applied to a polynomial matrix product, we obtain
Algorithm \ref{alg:multipoint}, which is due to
Chudnovsky and Chudnovsky \cite{ChudnovskyChudnovsky1988}.

\begin{algorithm}
  \caption{Polynomial matrix product using fast multipoint evaluation}
  \label{alg:multipoint}
  \begin{algorithmic}[1]
    \Require $M \in R[k]^{r \times r}$, $n = m \times w$
    \Ensure $\prod_{i=0}^{n-1} M(i)$
    \State $[T_0, \ldots, T_{m-1}] \gets [M(k), \ldots, M(k+m-1)]$
    \Statex \Comment{Compute entrywise Taylor shifts of the matrix}
    \State $U \gets \prod_{i=0}^{m-1} T_i$ \Comment{Binary splitting in $R[k]^{r \times r}$}
    \State $[V_0, \ldots, V_{w-1}] \gets [U(0), \ldots, U((w-1)m)]$
    \Statex \Comment{Fast multipoint evaluation}
    \State \Return{$\prod_{i=0}^{w-1} V_i$} \Comment{Repeated multiplication in $R^{r \times r}$}
  \end{algorithmic}
\end{algorithm}

We assume for simplicity of presentation
that $n$ is a multiple of the parameter
$m$ (in general, we can take $w = \lfloor n / m \rfloor$
and multiply by the remaining factors naively).
Taking $m \sim n^{1/2}$,
Algorithm \ref{alg:multipoint} requires $O(\M(n^{1/2}) \log n)$
arithmetic operations in the ring $R$,
using $O(n^{1/2} \log n)$ temporary storage
during the fast multipoint evaluation step.
Bostan, Gaudry and Schost \cite{BostanGaudrySchost2007} improve
the algorithm to obtain an $O(\textsf{M}(n^{1/2}))$
operation bound, which is the best available result for evaluating the
$n$-th term of a holonomic sequence over a general ring.
Algorithm \ref{alg:multipoint} and some of its applications
are studied further by Ziegler \cite{Ziegler2005}.

\section{Rectangular splitting for \\ parametric sequences} \label{sect:rectangular}

We now consider holonomic sequences whose recurrence equation
involves coefficients from a commutative ring $C$ with unity
as well as an additional, distinguished parameter $x$.
The setting is as in the previous section, but with
$R = C[x]$. In other words, we are considering
holonomic sequences of polynomials (or, by clearing denominators, rational functions)
of the parameter. We make the following definition.

\begin{definition}
A holonomic sequence
$(c(n) \equiv c(x,n))_{n=0}^{\infty}$
is parametric over $C$ (with parameter $x$)
if it satisfies a linear recurrence equation of
the form \eqref{eq:matrec} with $M \in R[k]$ where $R = C[x]$.
\end{definition}

Let $H$ be a commutative $C$-algebra, and let $c(x,k)$ be a
parametric holonomic sequence defined by a recurrence matrix
$M \in C[x][k]^{r\times r}$ and an initial vector
$c(z,0) \in H^r$.
Given some $z \in H$ and $n \in \mathbb{N}$, we wish to compute
the single vector $c(z,n) \in H^r$ efficiently subject to
the assumption that operations in $H$ are expensive
compared to operations in $C$. Accordingly, we distinguish between:
\begin{itemize}
\item \emph{Coefficient operations} in $C$
\item \emph{Scalar operations} in $H$ (additions in $H$ and multiplications $C \times H \to H$)
\item \emph{Nonscalar multiplications} $H \times H \to H$
\end{itemize}

For example, the sequence of rising factorials
$$c(x,n) = x^{\overline{n}} = x (x+1) \cdots (x+n-1)$$
is first-order holonomic (hypergeometric) with
the defining recurrence equation
$c(x,n+1) = (n+x) c(x,n),$
and parametric over $C = \mathbb{Z}$. In some applications,
we wish to evaluate $c(z,n)$ for $z \in H$ where
$H = \mathbb{R}$ or $H = \mathbb{C}$.

The Paterson-Stockmeyer algorithm \cite{PatersonStockmeyer1973}
solves the problem of evaluating a polynomial
$P(x) = \sum_{i=0}^{n-1} p_i x^i$ with $p_i \in C$
at $x = z \in H$ using a reduced
number of nonscalar multiplications. The idea is to write
the polynomial as a rectangular array
\begin{equation}
\begin{aligned}
P(x) & = (p_0 + \ldots + p_{m-1} x^{m-1}) \\
     & + (p_m + \ldots + p_{2m-1} x^{m-1}) x^m \\
     & + (p_{2m} + \ldots + p_{3m-1} x^{m-1}) x^{2m} \\
     & + \ldots
\end{aligned}
\label{eq:ps}
\end{equation}
After computing a table containing $x^2, x^3, \ldots, x^{m-1}$, the
inner (rowwise) evaluations can be done using only scalar multiplications,
and the outer (columnwise) evaluation with respect to $x^m$
can be done using about $n/m$ nonscalar multiplications.
With $m \sim n^{1/2}$, this algorithm requires $O(n^{1/2})$ nonscalar
multiplications and $O(n)$ scalar operations.

A straightforward application of the Paterson-Stockmeyer
algorithm to evaluate each entry of $\prod_{i=0}^{n-1} M(x,i) \in C[x]^{r \times r}$
yields Algorithm \ref{alg:rsps}
and the corresponding complexity estimate of Theorem \ref{propsqrt}.

\begin{algorithm}
  \caption{Polynomial matrix product and evaluation using rectangular splitting}
  \label{alg:rsps}
  \begin{algorithmic}[1]
    \Require $M \in C[x][k]^{r \times r}$, $z \in H$, $n = m \times w$
    \Ensure $\prod_{i=0}^{n-1} M(z,i) \in H^{r \times r}$
    \State $[T_0, \ldots, T_{n-1}] \gets [M(x,0), \ldots, M(x,n-1)]$
    \Statex \Comment{Evaluate matrix w.r.t. $k$, giving $T_i \in C[x]^{r \times r}$}
    \State $U \gets \prod_{i=0}^{n-1} T_i$ \Comment{Binary splitting in $C[x]^{r\times r}$}
    \State $V \gets U(z)$
    \Statex \Comment{Evaluate $U$ entrywise using Paterson-Stockmeyer with step length $m$}
    \State \Return{$V$}
  \end{algorithmic}
\end{algorithm}

\begin{theorem}
The $n$-th entry in a parametric holonomic sequence can be evaluated
using $\OO(n^{1/2})$ nonscalar multiplications, $\OO(n)$ scalar
operations, and $\OO(\M(n) \log n)$ coefficient operations.
\label{propsqrt}
\end{theorem}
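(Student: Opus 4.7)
The plan is to analyze Algorithm \ref{alg:rsps} line by line and sum the costs, treating $r$ and the degrees of $M$ in $x$ and $k$ as fixed constants (absorbed into the big-$\OO$).

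First I would handle step 1. Since $M \in C[x][k]^{r \times r}$ has entries of bounded degree in both $x$ and $k$, the entrywise evaluation $M(x, i)$ requires only $\OO(1)$ coefficient operations per index $i$, for a total of $\OO(n)$ coefficient operations. The resulting matrices $T_i \in C[x]^{r\times r}$ have entries of degree $\OO(1)$ in $x$, so $\prod_{i=0}^{n-1} T_i$ has entries of degree $\OO(n)$. Step 2 then invokes binary splitting over $C[x]^{r \times r}$, which as recalled for the polynomial matrix case in section~\ref{sect:matrix} costs $\OO(\M(n) \log n)$ coefficient operations.

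The heart of the argument is step 3. Each of the $r^2 = \OO(1)$ entries of $U$ is a polynomial $P(x) = \sum_{i=0}^{d} p_i x^i \in C[x]$ with $d = \OO(n)$. Writing $P$ in the rectangular form \eqref{eq:ps} with step length $m = \lceil n^{1/2}\rceil$ and $w = \lceil d/m \rceil = \OO(n^{1/2})$, I would evaluate $P(z)$ as follows. First, precompute the table $z^2, z^3, \ldots, z^{m-1}$ using $\OO(n^{1/2})$ nonscalar multiplications; crucially, this table is computed once and shared among all $r^2$ entries. Then for each of the $w$ rows, the inner sum $\sum_{j=0}^{m-1} p_{im+j} z^j$ requires $\OO(m)$ scalar operations (multiplications $C \times H \to H$ and additions in $H$), so summed over all rows and all entries this contributes $\OO(r^2 n) = \OO(n)$ scalar operations. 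Finally, compute $z^m$ once, and then evaluate the outer polynomial in $z^m$ by Horner's rule using $w = \OO(n^{1/2})$ nonscalar multiplications per entry, giving $\OO(n^{1/2})$ in total.

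Adding the three contributions yields the bound of the theorem. The one subtle point worth emphasizing, and likely the main place a naive analysis would lose, is the sharing in step 3: powers of $z$ are precomputed once for the entire matrix, so the nonscalar multiplication count does not acquire a spurious factor from iterating Paterson--Stockmeyer independently over the $r^2$ entries. Beyond that, the argument is a routine summation of the three established cost estimates.
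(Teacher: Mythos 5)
Your proposal is correct and follows the same route as the paper's proof: evaluate the $T_i$ in $O(n)$ coefficient operations, form $U$ by binary splitting in $O(\M(n)\log n)$ coefficient operations, and apply Paterson--Stockmeyer with $m \sim n^{1/2}$ to the degree-$O(n)$ entries of $U$; you merely unfold the Paterson--Stockmeyer cost analysis that the paper cites as known. (Your remark about sharing the power table across entries is a sensible implementation point but is not needed for the stated bounds, since $r$ is treated as fixed and an $r^2$ factor is absorbed into the $O$-constants anyway.)
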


\begin{proof}
We call Algorithm \ref{alg:rsps} with $m \sim n^{1/2}$.
Letting $d = \max \deg_k M$ and $e = \max \deg_x M$,
computing $T_0, \ldots, T_{n-1}$
takes $O(nde) = O(n)$ coefficient operations.
Since $\deg_x T_i \le e$, generating $U$
using binary splitting costs
$O(\M(n) \log n)$ coefficient operations.
Each entry in $U$ has degree at most $ne = O(n)$,
and can thus be evaluated using $O(n^{1/2})$
nonscalar multiplications and $O(n)$ scalar operations
with the Paterson-Stockmeyer algorithm.
\end{proof}

If we only count nonscalar multiplications, Theorem \ref{propsqrt}
is an asymptotic improvement over fast multipoint evaluation which
uses $O(n^{1/2} \log^{2+o(1)} n)$ nonscalar multiplications
($O(n^{1/2} \log^{1+o(1)} n)$ with the improvement of
Bostan, Gaudry and Schost).

Algorithm \ref{alg:rsps} is not ideal in practice
since the polynomials in $U$ grow to degree $O(n)$.
Their coefficients also grow
to $O(n \log n)$ bits when $C = \mathbb{Z}$ (for example,
in the case of rising factorials,
the coefficients are the Stirling numbers of the first
kind $S(n,k)$ which grow to a magnitude between $(n-1)!$ and $n!$).
This problem can be mitigated by repeatedly applying Algorithm \ref{alg:rsps}
to successive subproducts $\prod_{i=a}^{a+\tilde n} M(z,i)$ where $\tilde n \ll n$,
but the nonscalar complexity is then no longer the best possible.
A better strategy is to apply
rectangular splitting to the matrix product itself, leading to
Algorithm~\ref{alg:rsshift}.
We can then reach the same operation complexity
while only working with polynomials of degree $O(n^{1/2})$,
and over $C = \mathbb{Z}$, having coefficients of bit size $O(n^{1/2} \log n)$.

\begin{theorem}
For any choice of $m$, Algorithm~\ref{alg:rsshift} requires
$O(m + n/m)$ nonscalar multiplications,
$O(n)$ scalar operations, and $O((n/m) \M(m) \log m)$ coefficient operations.
In particular, the complexity bounds stated in Theorem \ref{propsqrt}
also hold for Algorithm \ref{alg:rsshift} with $m \sim n^{1/2}$.
Moreover, Algorithm \ref{alg:rsshift} only
requires storage of $O(m)$ elements of $C$ and~$H$,
and if $C = \mathbb{Z}$, the coefficients have bit size $O(m \log m)$.
\end{theorem}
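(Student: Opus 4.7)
The plan is to analyze Algorithm~\ref{alg:rsshift} block by block. I would split the range $[0,n)$ into $w = n/m$ consecutive blocks of length $m$, precompute once the table of powers $z^0, z^1, \ldots, z^m \in H$, and then, for each $j = 0, \ldots, w-1$, (i) build the polynomial matrix $U_j(x) = \prod_{i=0}^{m-1} M(x, jm+i) \in C[x]^{r\times r}$ by binary splitting in $C[x]$, (ii) evaluate $V_j = U_j(z) \in H^{r \times r}$ against the precomputed power table, and (iii) update the running product $V_{w-1}\cdots V_0$ in $H^{r\times r}$ by multiplying by $V_j$ on the left.

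The cost accounting then proceeds by operation type. The key observation for step (ii) is that each entry of $U_j$ is a polynomial of $x$-degree at most $m$ and the powers $z^0, \ldots, z^m$ are already in hand, so forming $\sum_i c_i z^i$ uses only scalar multiplications $C \times H \to H$ and additions in $H$, contributing $O(m)$ scalar operations per block and no nonscalar multiplications. Step (iii) is a constant-size matrix multiplication in $H^{r\times r}$, using $O(1)$ nonscalar multiplications per block since $r$ is fixed. Summed over the $w = n/m$ blocks, this gives $O(n/m)$ nonscalar multiplications and $O(n)$ scalar operations, and together with the $O(m)$ nonscalar multiplications used to build the power table we obtain the claimed $O(m + n/m)$ nonscalar and $O(n)$ scalar bounds.

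For the coefficient operations, step (i) is exactly the binary splitting of $m$ polynomial matrices of bounded $x$-degree analyzed in Section~\ref{sect:matrix}, costing $O(\M(m)\log m)$ operations in $C$ per block and hence $O((n/m)\M(m)\log m)$ in total. Choosing $m \sim n^{1/2}$ and invoking superlinearity of $\M$ collapses this to $O(\M(n)\log n)$ and recovers the bounds of Theorem~\ref{propsqrt}. The space bound follows because the power table occupies $O(m)$ elements of $H$, each $U_j$ has $r^2 = O(1)$ entries of degree at most $m$ and hence $O(m)$ coefficients in $C$, and the running product is a single $r \times r$ matrix; since $U_j$ is discarded once $V_j$ is formed, total storage stays $O(m)$. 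For $C = \mathbb{Z}$, the standard bound on a product of $m$ polynomial matrices of bounded degree with $O(1)$-bit coefficients gives coefficients of bit size $O(m \log m)$ in the block product $U_j$.

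The main subtlety, and the reason the algorithm is laid out this way, is the choice to align the Paterson--Stockmeyer power table with the full block length $m$ rather than with $\sqrt{m}$: this is what eliminates nonscalar multiplications inside the entrywise evaluation of $U_j(z)$, and only then does the per-block cost split cleanly into (expensive) scalar/nonscalar work in $H$ and (cheap) binary-splitting work in $C$. Processing the blocks sequentially, rather than assembling a single bivariate $U(x,k) \in C[x,k]^{r\times r}$ and shifting, is what keeps storage linear in $m$ and the coefficient bit size at $O(m \log m)$; I expect this bookkeeping (rather than any individual estimate) to be the real content of the verification.
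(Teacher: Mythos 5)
Your proof is correct and follows essentially the same route as the paper's own (one-line) argument: per-block binary splitting, a power table of length $O(m)$ built once, purely scalar evaluation of each degree-$O(m)$ block product, and $O(1)$ nonscalar multiplications per block for the running product $S \times V$. The only slip is in the bit-size claim: over $C=\mathbb{Z}$ the entries of $M(x,im+j)$ have $O(\log n)$-bit rather than $O(1)$-bit coefficients, so a block product has coefficients of $O(m\log n)$ bits, which matches the stated $O(m\log m)$ only when $\log n = O(\log m)$, e.g.\ in the intended regime $m\sim n^{1/2}$.
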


\begin{proof}
This follows by applying a similar argument as used in the proof of
Theorem~\ref{propsqrt}
to the operations in the inner loop of Algorithm~\ref{alg:rsshift},
noting that $U$ has entries of degree $m \deg_x M = O(m)$
and that the matrix multiplication $S \times V$
requires $O(1)$ nonscalar multiplications and scalar
operations (recalling that we consider $r$ fixed).
\end{proof}

\begin{algorithm}
  \caption{Improved polynomial matrix product and evaluation using rectangular splitting}
  \label{alg:rsshift}
  \begin{algorithmic}[1]
    \Require $M \in C[x][k]^{r \times r}$, $z \in H$, $n = m \times w$
    \Ensure $\prod_{i=0}^{n-1} M(z,i) \in H^{r \times r}$
    \State Compute power table $[z^j$, $0 \le j \le m \deg_x M]$
    \State $V \gets 1_{H^{r\times r}}$ \Comment{Start with the identity matrix}
    \For{$i \gets 0 \textrm{ to } w - 1$}
        \State $[T_0, \ldots, T_{m-1}] \gets [M(x,im+j)]_{j=0}^{m-1}$
        \Statex \Comment{Evaluate matrix w.r.t. $k$, giving $T_j \in C[x]^{r \times r}$}
        \State $U \gets \prod_{j=0}^{m-1} T_j$ \Comment{Binary splitting in $C[x]^{r\times r}$}
        \State $S \gets U(z)$ \Comment{Evaluate w.r.t. $x$ using power table}
        \State $V \gets S \times V$ \Comment{Multiplication in $H^{r\times r}$}
    \EndFor
    \State \Return{$V$}
  \end{algorithmic}
\end{algorithm}

\subsection{Variations}

Many variations of Algorithm~\ref{alg:rsshift}
are possible. Instead of using binary
splitting directly to compute $U$, we can generate the bivariate matrix
\begin{equation}
W_m = \prod_{i=0}^{m-1} M(x,k+i) \in C[x][k]^{r \times r}
\end{equation}
at the start of the algorithm, and then
obtain $U$ by evaluating $W_m$ at $k = im$.
We may also work with differences of two successive $U$
(for small $m$, this can introduce cancellation
resulting in slightly smaller polynomials or coefficients).
Combining both variations, we end up with Algorithm~\ref{alg:rectdelta}
in which we expand and evaluate the bivariate polynomial matrices
\begin{equation*}
\Delta_m = \prod_{i=0}^{m-1} M(x,k+m+i) - \prod_{i=0}^{m-1} M(x,k+i) \in C[x][k]^{r \times r}.
\end{equation*}
This version of the rectangular splitting algorithm can be viewed as a generalization of an
algorithm used by Smith \cite{Smith2001} for computing rising factorials (we consider
the case of rising factorials further in Section \ref{sect:rising}).
In fact, the author of the present paper first
found Algorithm~\ref{alg:rectdelta} by generalizing
Smith's algorithm, and only later discovered
Algorithm \ref{alg:rsshift} by ``interpolation'' between
Algorithm~\ref{alg:rsps} and Algorithm~\ref{alg:rectdelta}.

\begin{algorithm}
  \caption{Polynomial matrix product and evaluation using rectangular splitting (variation)}
  \label{alg:rectdelta}
  \begin{algorithmic}[1]
    \Require $M \in C[x][k]^{r \times r}$, $z \in H$, $n = m \times w$
    \Ensure $\prod_{i=0}^{n-1} M(z,i) \in H^{r \times r}$
    \State Compute power table $[z^j$, $0 \le j \le m \deg_x M]$
    \State $\Delta \gets \prod_{i=0}^{m-1} M(x,k+m+i) - \prod_{i=0}^{m-1} M(x,k+i)$
    \Statex \Comment{Binary splitting in $C[x][k]^{r \times r}$}
    \State $V \gets S \gets \prod_{i=0}^{m-1} M(z,i)$
    \Statex \Comment{Evaluate w.r.t. $k$, and w.r.t. $x$ using power table}
    \For{$i \gets 0 \textrm{ to } w - 2$}
      \State $S \gets S + \Delta(z,mi)$
      \Statex \Comment{Evaluate w.r.t. $k$, and w.r.t. $x$ using power table}
      \State $V \gets S \times V$
    \EndFor
    \State \Return{$V$}
  \end{algorithmic}
\end{algorithm}

The efficiency of Algorithm~\ref{alg:rectdelta} is theoretically
somewhat worse than that of Algorithm~\ref{alg:rsshift}.
Since $\deg_x W_m = O(m)$ and $\deg_k W_m = O(m)$,
$W_m$ has $O(m^2)$ terms (likewise for $\Delta_m$),
making the space complexity higher and increasing
the number of coefficient operations to $O((n/m) m^2)$
for the evaluations with respect to $k$.
However, this added cost may be negligible in practice.
Crucially, when $C = \mathbb{Z}$, the coefficients
have similar bit sizes as in Algorithm \ref{alg:rsshift}.

Initially generating $W_m$ or $\Delta_m$ also adds some cost,
but this is cheap compared to the evaluations when $n$ is
large enough: binary splitting over $C[x][k]$
costs $O(\M(m^2) \log m)$ coefficient operations by
Lemma 8.2 and Corollary 8.28 in \cite{vonzurGathenGerhard2003}.
This is essentially the same as the total cost of binary
splitting in Algorithm~\ref{alg:rsshift} when $m \sim n^{1/2}$.

We also note that a small improvement to Algorithm~\ref{alg:rsshift}
is possible if $M(x,k+m) = M(x+m,k)$:
instead of computing $U$ from scratch using binary splitting
in each loop iteration, we can update it using a Taylor shift.
At least in sufficiently large characteristic,
the Taylor shift can be computed using $O(\M(m))$ coefficient
operations with the
convolution algorithm of Aho, Steiglitz and Ullman \cite{Aho1975evaluating},
saving a factor $O(\log n)$ in the total number of coefficient operations.
In practice,
basecase Taylor shift algorithms may also be beneficial (see \cite{von1997fast}).

In lucky cases, the polynomial coefficients (in either
Algorithm \ref{alg:rsshift} or \ref{alg:rectdelta}) might
satisfy a recurrence relation, allowing them
to be generated using $O(n)$ coefficient operations
(and avoiding the dependency on polynomial arithmetic).


\subsection{Several parameters}

The rectangular splitting technique can be generalized
to sequences $c(x_1,\ldots,x_v,k)$ depending on several parameters.
In Algorithm~\ref{alg:rsshift}, we simply replace the power
table by a $v$-dimensional array of the possible
monomial combinations. Then we 
have the following result (ignoring coefficient operations).

\begin{theorem}
The $n$-th entry in a holonomic sequence depending on $v$ parameters
can be evaluated with rectangular splitting
using $O(m^v + n/m)$ nonscalar multiplications and
$O((n/m) m^v)$ scalar multiplications.
In particular, taking $m = n^{1/(v+1)}$, $O(n^{1-1/v})$
nonscalar multiplications and $O(n^{2v / (1+v)})$ scalar multiplications
suffice.
\label{thm:several}
\end{theorem}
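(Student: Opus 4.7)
The plan is to generalize Algorithm~\ref{alg:rsshift} by replacing the univariate power table $[z^j]$ with a $v$-dimensional table of monomial values $z_1^{e_1} \cdots z_v^{e_v}$, where each exponent $e_i$ ranges from $0$ to $m \deg_{x_i} M$. Since $r$ and the individual degrees $\deg_{x_i} M$ are treated as fixed, this table has $O(m^v)$ entries, and it can be built by iteratively multiplying each newly added monomial by one of the $z_i$: one nonscalar multiplication per entry, hence $O(m^v)$ nonscalar multiplications in total.

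With the table in place, the rest of the algorithm mirrors the univariate case. In each of the $w = n/m$ outer iterations we form the product $U = \prod_{j=0}^{m-1} M(x_1,\ldots,x_v,\,im+j)$ in $C[x_1,\ldots,x_v]^{r\times r}$ by binary splitting, evaluate $U$ entrywise at $(z_1,\ldots,z_v)$ by looking up the precomputed monomials, and multiply the resulting matrix $S \in H^{r\times r}$ into the accumulator $V$. Since $U$ has degree $O(m)$ in each $x_i$, each of its $r^2 = O(1)$ entries is a $C$-linear combination of $O(m^v)$ monomials; substituting each monomial by its table value and summing in $H$ costs $O(m^v)$ scalar operations per iteration and introduces no new nonscalar multiplications. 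The matrix product $S \times V$ contributes a further $O(1)$ nonscalar multiplications per iteration because $r$ is fixed.

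Summing over the $w = n/m$ iterations and adding the one-time cost of building the monomial table yields $O(m^v + n/m)$ nonscalar multiplications and $O((n/m)\,m^v)$ scalar multiplications, as claimed. The second part of the theorem then follows by choosing $m = n^{1/(v+1)}$ to balance the two contributions to the nonscalar count.

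The only delicate point --- and what I expect to be the main thing to justify carefully --- is the claim that evaluating a $v$-variate polynomial of coordinatewise degree $O(m)$ at the precomputed table uses only scalar operations. This is true because every monomial value is already stored as an element of $H$, so the remaining work is pure $C$-linear combination: multiplications $C \times H \to H$ and additions in $H$, both of which are scalar by definition. Everything else is analogous bookkeeping to the proof of Theorem~\ref{propsqrt}, and the statement of the theorem deliberately suppresses the coefficient-operation count, so we need not analyse the cost of the binary splitting that produces $U$.
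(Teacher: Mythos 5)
Your proposal is correct and is essentially the paper's own argument, just spelled out: the paper's entire proof is the one-line observation that the entries of a product of $m$ successive shifts of $M$ are $C$-linear combinations of monomials $x_1^{e_1}\cdots x_v^{e_v}$ with $0 \le e_i \le m\deg_{x_i} M$, hence $O(m^v)$ table entries, with the rest of the accounting inherited from Algorithm~\ref{alg:rsshift}. One caveat on your final sentence: with $m = n^{1/(v+1)}$ the balanced nonscalar count is $\Theta(m^v + n/m) = \Theta(n^{v/(v+1)}) = \Theta(n^{1-1/(v+1)})$, which agrees with Table~\ref{tab:dimension} but not with the exponent $1-1/v$ printed in the theorem statement (for $v \ge 2$ the printed bound is strictly smaller than what the algorithm achieves), so you should flag that discrepancy rather than assert that the second claim "follows" as stated.
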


\begin{proof}
If $d_i = \operatorname{deg}_{x_i} M \le d$, the entries of a
product of $m$ successive shifts of $M$ are $C$-linear
combinations of $x_1^{e_{1,j}} \cdots x_h^{e_{v,j}}$,
$0 \le e_{i,j} \le m d_i \le md$, so there is a total of $O(m^v)$ powers.
\end{proof}

Unfortunately, this gives rapidly diminishing returns for large $v$.
When $v > 1$, the number of nonscalar multiplications
according to Theorem~\ref{thm:several} is
asymptotically worse than with fast multipoint evaluation,
and reducing the number of nonscalar multiplication
requires us to perform more than $O(n)$ scalar multiplications,
as shown in Table \ref{tab:dimension}.
Nevertheless, rectangular splitting could perhaps still be useful
in some settings where the cost of nonscalar
multiplications is sufficiently large.

\begin{table}[ht!]
\centering
\begin{tabular}{ l | l l l }
$v$ & $m$ & Nonscalar & Scalar\\ \hline
$1$ & $n^{1/2}$ & $O(n^{0.5})$ & $O(n)$ \\
$2$ & $n^{1/3}$ & $O(n^{0.666\ldots})$ & $O(n^{1.333\ldots})$ \\
$3$ & $n^{1/4}$ & $O(n^{0.75})$ & $O(n^{1.5})$ \\
$4$ & $n^{1/5}$ & $O(n^{0.8})$ & $O(n^{1.6})$ \\
\end{tabular}
\caption{Step size $m$ minimizing the number of
nonscalar multiplications for rectangular splitting involving $v$ parameters.}
\label{tab:dimension}
\end{table}

\section{Numerical evaluation} \label{sect:numerical}

Assume that we want to evaluate $c(x,n)$
where the underlying coefficient ring is
$C = \mathbb{Z}$ (or $\overline{\mathbb{Q}}$)
and the parameter $x$ is a real or complex number
represented by a floating-point approximation with
a precision of $p$ bits.

Let $\mathsf{M}_{\mathbb{Z}}(p)$ denote the bit complexity of some algorithm
for multiplying two $p$-bit integers or floating-point numbers. Commonly
used algorithms include classical multiplication with
$\mathsf{M}_{\mathbb{Z}}(p) = O(p^2)$, Karatsuba multiplication with
$\mathsf{M}_{\mathbb{Z}}(p) = O(p^{1.585})$, and
Fast Fourier Transform (FFT) based multiplication,
such as the Sch\"{o}nhage-Strassen algorithm, with
$\mathsf{M}_{\mathbb{Z}}(p) =  \OOsoft(p)$.
An unbalanced multiplication where the smaller operand has
$q$ bits can be done using
$O((p/q) \mathsf{M}_{\mathbb{Z}}(q))$ bit operations \cite{mca}.

The naive algorithm clearly uses $O(n \MZ(p))$ bit operations
to evaluate $c(x,n)$, or $\OOsoft(np)$ with FFT multiplication.
In Algorithm~\ref{alg:rsshift}, the nonscalar
multiplications cost $O((m + n/m) \MZ(p))$ bit operations.
The coefficient operations cost $\tilde O(mn)$
bit operations (assuming the use of fast polynomial arithmetic),
which becomes negligible if $p$ grows faster than $m$.
Finally, the scalar multiplications (which are unbalanced) cost
\begin{equation*}
O\left(n \, p \, \frac{\MZ(m \log m)}{m \log m} \right)
\end{equation*}
bit operations.
Taking $m \sim n^{\alpha}$ for $0 < \alpha < 1$, we get
an asymptotic speedup with classical or Karatsuba multiplication
(see Table~\ref{tab:multalg})
provided that $p$ grows sufficiently rapidly along with $n$.
With FFT multiplication, the scalar multiplications
become as expensive as the nonscalar multiplications,
and rectangular therefore does not give an asymptotic
improvement.

\begin{table}[ht!]
\centering
\begin{tabular}{ l | l l }
Mult. algorithm & Scalar multiplications & Naive \\ \hline
Classical & $\OOsoft(n^{1+\alpha} p)$   & $\OOsoft(np^2)$ \\
Karatsuba & $\OOsoft(n^{1+0.585\alpha} p)$ & $\OOsoft(np^{1.585})$ \\
FFT       & $\OOsoft(np)$           & $\OOsoft(np)$ \\
\end{tabular}
\caption{Bit complexity of scalar multiplications in Algorithm~\ref{alg:rsshift}
and total bit complexity of the naive algorithm}
\label{tab:multalg}
\end{table}

However, due to the overhead of FFT multiplication,
rectangular splitting is still likely to save a constant
factor over the naive algorithm. In practice,
one does not necessarily get the best performance by choosing
$m \approx n^{0.5}$ to minimize the number of nonscalar multiplications alone;
the best $m$ has to be determined empirically.

Algorithm~\ref{alg:multipoint} is asymptotically
faster than the naive algorithm as well
as rectangular splitting, with a
bit complexity of $\OOsoft(n^{1/2} p)$.
It should be noted that this estimate does not reflect the complexity
required to obtain a given \emph{accuracy}.
As observed by K\"{o}hler and Ziegler \cite{KohlerZiegler2008},
fast multipoint evaluation can exhibit
poor numerical stability,
suggesting that~$p$ might have to
grow at least as fast as $n$ to get accuracy proportional to $p$.

When $x$ and all coefficients in $M$ are
positive, rectangular splitting introduces no subtractions that
can cause catastrophic cancellation, and the reduction
of nonscalar multiplications
even improves stability compared to the naive algorithm,
making $O(\log n)$ guard bits sufficient to reach $p$-bit accuracy.
When sign changes are present, evaluating degree-$m$
polynomials in expanded form can reduce accuracy, typically
requiring use of $\tilde O(m)$ guard bits. In this case
Algorithm~\ref{alg:rsshift} is a
marked improvement over Algorithm~\ref{alg:rsps}.

\subsection{Summation of power series}

A common situation is that we wish to evaluate
a truncated power series
\begin{equation}
f(x) \approx s(x,n) = \sum_{k=0}^n c(k) x^k, \quad n = O(p)
\label{eq:fseries}
\end{equation}
where $c(k)$ is a holonomic sequence taking rational (or algebraic)
values and $x$ is a real or complex number.
In this case the Paterson-Stockmeyer algorithm is applicable,
but might not give a speedup when applied directly
as in Algorithm~\ref{alg:rsps} due to the growth of the coefficients.
Since $d(k) = c(k) x^k$ and $s(x,n)$ are
holonomic sequences with $x$ as parameter,
Algorithm~\ref{alg:rsshift} is applicable.

Smith noted in \cite{Smith1989} that when $c(k)$ is hypergeometric
(Smith considered the Taylor expansions of elementary functions
in particular),
the Paterson-Stockmeyer technique can be combined with scalar divisions
to remove accumulated factors from the coefficients.
This keeps all scalars at a size of $O(\log n)$ bits,
giving a speedup over naive evaluation when
non-FFT multiplication is used
(and when scalar divisions are assumed to be roughly
as cheap as scalar multiplications).
This algorithm is studied in more detail by Brent and Zimmermann \cite{mca}.

At least conceptually, Algorithm~\ref{alg:rsshift} can be viewed as a
generalization of Smith's hypergeometric summation algorithm
to arbitrary holonomic sequences depending on a parameter
(both algorithms can be viewed as means to
eliminate repeated content from the associated matrix product).
The speedup is not quite as good since we only reduce
the coefficients to $O(n^{1/2} \log n)$ bits versus Smith's $O(\log n)$.
However, even for hypergeometric series, Algorithm~\ref{alg:rsshift}
can be slightly faster than Smith's algorithm for small $n$
(e.g. $n \lesssim 100$) since divisions tend to be more expensive than
scalar multiplications in implementations.

Algorithm \ref{alg:rsshift} is also more general:
for example, we can use it to evaluate the
generalized hypergeometric function
\begin{equation}
\bigpFq{p}{q}{a_1, \ldots, a_p}{b_1, \ldots, b_q}{\!z} =
\sum_{k=0}^\infty
\frac{a_1^{\overline k} \cdots a_p^{\overline k}}{b_1^{\overline k} \cdots b_q^{\overline k}} \, \frac {w^k}{k!}
\end{equation}
where $a_i, b_i, w$ (as opposed to $w$ alone) are rational functions of
the real or complex parameter $x$.

An interesting question, which we do not attempt to answer here, is
whether there is a larger class of parametric sequences other than
hypergeometric sequences and their sums for which we can reduce
the number of nonscalar multiplications to $O(n^{1/2})$
while working with coefficients that are strictly smaller than
$O(n^{1/2} \log n)$ bits.

\subsection{Comparison with asymptotically faster \\ algorithms}

If all coefficients in \eqref{eq:fseries} including the parameter $x$
are rational or algebraic numbers and the series converges, $f(x)$ can be evaluated
to $p$-bit precision using $\tilde O(p)$ bit operations
using binary splitting. 
An $\tilde O(p)$ bit complexity can also
be achieved for arbitrary real or complex $x$ by combining
binary splitting with translation of the
differential equation for $f(x)$.
The general version of this algorithm, sometimes called the \emph{bit-burst algorithm},
was developed by Chudnovsky and Chudnovsky and independently
with improvements by van der Hoeven \cite{vdH:hol}. It is used
in some form
for evaluating elementary functions in several libraries,
and a general version has been implemented by Mezzarobba \cite{Mezzarobba2010}.

For high-precision evaluation of elementary functions, binary splitting typically only
becomes worthwhile at a precision of several thousand digits,
while implementations typically use Smith's algorithm for summation of hypergeometric series
at lower precision.
We expect that Algorithm~\ref{alg:rsshift} can be used
in a similar fashion for a wider class of special functions.

When $c(k)$ in $\eqref{eq:fseries}$ involves real or complex numbers,
binary splitting no longer gives a speedup. In this case,
we can use fast multipoint to evaluate \eqref{eq:fseries} using
$\OOsoft(p^{1.5})$ bit operations
(Borwein \cite{Borwein1987} discusses the application to numerical evaluation
of hypergeometric functions). This method does not appear to be
widely used in practice, presumably owing to its high overhead and relative
implementation difficulty. Although rectangular splitting is
not as fast asymptotically, its ease of implementation and low overhead
makes it an attractive alternative.

\section{High-precision computation of the gamma function} \label{sect:casestudy}

In this section, we consider two holonomic sequences depending
on a numerical parameter: rising factorials, and the partial
sums of a certain hypergeometric series defining the incomplete
gamma function. In both cases, our goal is to accelerate numerical
evaluation of the gamma function at very high precision.

We have implemented the algorithms in the present section using
floating-point ball arithmetic (with rigorous error bounding)
as part of the Arb library\footnote{\url{http://fredrikj.net/arb}}.
All arithmetic in $\mathbb{Z}[x]$ is done via FLINT \cite{Hart2010},
using a Sch\"{o}nhage-Strassen FFT implemented by W. Hart.

Fast numerically stable multiplication in $\mathbb{R}[x]$ is
done by breaking polynomials into segments with
similarly-sized coefficients and computing the subproducts
exactly in $\mathbb{Z}[x]$ (a simplified version of
van der Hoeven's block multiplication algorithm
\cite{vdH:stablemult}), and asymptotically fast
polynomial division is implemented using Newton iteration.

All benchmark results were obtained on a 2.0 GHz Intel Xeon E5-2650 CPU.

\subsection{Rising factorials} \label{sect:rising}

Rising factorials of a real or complex argument appear when
evaluating the gamma function via the asymptotic Stirling series
\begin{align*}
\log \Gamma(x) & = \left(x-\frac{1}{2}\right) \log x - x + \frac{\log 2 \pi}{2} \\
& + \sum_{k=1}^{N-1} \frac{B_{2k}}{2k(2k-1)x^{2k-1}} + R_N(x).
\end{align*}
To compute $\Gamma(x)$ with $p$-bit accuracy, we choose a
positive integer $n$ such that there is an $N$ for which
$|R_N(x+n)| < 2^{-p}$, and then evaluate
${\Gamma(x) = \Gamma(x+n) / x^{\overline{n}}}$.
It is sufficient to choose $n$ such that the
real part of $x + n$ is of order $\beta p$ where
$\beta = (2 \pi)^{-1} \log 2 \approx 0.11$.

The efficiency of the Stirling series can be improved
by choosing $n$ slightly larger than the absolute minimum
in order to reduce $N$. For example,
$\mathrm{Re}(x+n) \approx 2 \beta p$ is a good choice.
A faster rising factorial is doubly advantageous: it
speeds up the argument reduction, and making larger $n$ cheap
allows us to get away with fewer Bernoulli numbers.

Smith \cite{Smith2001} uses the difference of four consecutive terms
\begin{align*}
(x+k+4)^{\overline{4}} - (x+k)^{\overline{4}} & = (840 + 632 k + 168 k^2 + 16 k^3) \\
          & + (632 + 336 k + 48 k^2) x \\
          & + (168 + 48 k) x^2 \\
          & + 16 x^3
\end{align*}
to reduce the number of nonscalar multiplications to compute $x^{\overline{n}}$ from $n-1$
to about $n / 4$. This is precisely Algorithm~\ref{alg:rectdelta}
specialized to the sequence of rising factorials and with a fixed step length $m = 4$.

Consider Smith's algorithm with a variable step length~$m$.
Using the binomial theorem and some rearrangements,
the polynomials can be written down explicitly as
\begin{equation}
\Delta_m = (x+k+m)^{\overline{m}} - (x+k)^{\overline{m}}
 = \sum_{v=0}^{m-1} x^v \sum_{i=0}^{m-v-1} k^i \; C_m(v,i)
\end{equation}
where
\begin{equation}
C_m(v,i) = \sum_{j=i+1}^{m-v} m^{j-i} S(m,v+j) {{v+j} \choose v} {j \choose i}
\label{eq:ccoeff}
\end{equation}
and where $S(m,v+j)$ denotes an unsigned Stirling number of the first kind.
This formula can be used to generate $\Delta_m$ efficiently in practice without
requiring bivariate polynomial arithmetic. In fact, the coefficients can
be generated even cheaper by taking advantage of the recurrence (found by
M. Kauers)
\begin{equation}
(v+1) C_m(v+1,i) = (i+1) C_m(v, i+1).
\end{equation}

We have implemented several algorithm for evaluating the rising
factorial of a real or complex number.
For tuning parameters, we empirically determined simple
formulas that give nearly optimal
performance for different combinations of $n, p < 10^5$
(typically within 20\% of the speed with the best tuning
value found by a brute force search):
\begin{itemize}
\item In Algorithm~\ref{alg:multipoint}, $m = n^{0.5}$.
\item Algorithm~\ref{alg:rsps} is applied to subproducts of length $\tilde n = \min(2 n^{0.5}, 10 p^{0.25})$, with $m = \tilde n^{0.5}$.
\item In Algorithms~\ref{alg:rsshift} and \ref{alg:rectdelta}, $m = \min(0.2 p^{0.4}, n^{0.5})$.
\end{itemize}

Our implementation of Algorithm~\ref{alg:rectdelta} uses \eqref{eq:ccoeff}
instead of binary splitting, and Algorithm~\ref{alg:rsshift}
exploits the symmetry of $x$ and $k$ to update the
matrix $U$ using Taylor shifts instead of repeated binary splitting.

Figure \ref{fig:rising} compares the running times where $x$
is a real number with a precision of $p = 4n$ bits. This input
corresponds to that used in our Stirling series implementation
of the gamma function.

\begin{figure}[width=8cm] \label{fig:rising}
\begin{center}
\includegraphics[width=8cm]{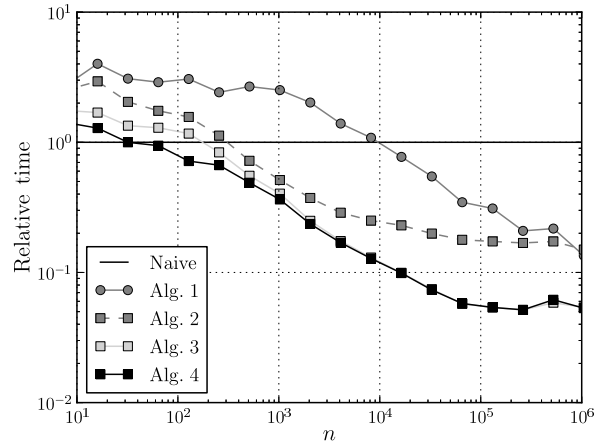}
\caption{Timings of rising factorial algorithms, normalized
against the naive algorithm.}
\end{center}
\end{figure}

On this benchmark, Algorithms \ref{alg:rsshift} and \ref{alg:rectdelta} are the
best by far, gaining a 20-fold speedup over the naive algorithm for large $n$
(the speedup levels off
around $n = 10^5$, which is expected since this is
the approximate point where FFT integer multiplication kicks in).
Algorithm~\ref{alg:rectdelta} is slightly faster than Algorithm \ref{alg:rsshift}
for $n < 10^3$, even narrowly beating the naive algorithm for
$n$ as small as $\approx 10^2$.

Algorithm~\ref{alg:multipoint} (fast multipoint evaluation)
has the most overhead of all algorithms and only overtakes
the naive algorithm around $n = 10^4$ (at a precision of $40,000$ bits).
Despite its theoretical advantage, it is slower than rectangular splitting
up to $n$ exceeding $10^6$.

Table \ref{tab:gammatime} shows absolute timings for evaluating $\Gamma(x)$
where $x$ is a small real number in Pari/GP 2.5.4, and our implementation
in Arb (we omit results for MPFR 3.1.1 and Mathematica 9.0, which
both were slower than Pari).
Both implementations use the Stirling series, caching
the Bernoulli numbers to speed up multiple evaluations.
The better speed of Arb for a repeated evaluation (where the Bernoulli numbers
are already cached) is mainly due to the use of rectangular splitting
to evaluate the rising factorial.
The total speedup is smaller than it would be for computing the
rising factorial alone since we still have to evaluate the Bernoulli
number sum in the Stirling series.
The gamma function implementations
over $\mathbb{C}$ have similar characteristics.

\begin{table}[ht!] \label{tab:gammatime}
\centering
\begin{tabular}{r | l l l l}
Decimals & Pari/GP & (first) & Arb & (first) \\ \hline
100 & 0.000088 & & 0.00010 & \\
300 & 0.00048 & & 0.00036 & \\
1000 & 0.0057 & & 0.0025 & \\
3000 & 0.072 & (9.2) & 0.021 & (0.090) \\
10000 & 1.2 & (324) & 0.25 & (1.4) \\
30000 & 15 & (8697) & 2.7 & (22) \\
100000 &   & &  39 & (433) \\
300000 &   & &  431 & (7131) \\
\end{tabular}
\caption{Timings in seconds for evaluating $\Gamma(x)$
where $x$ is a small real number (timings for the first evaluation,
including Bernoulli number generation, is shown in parentheses).}
\end{table}

\subsection{A one-parameter hypergeometric series}

The gamma function can be approximated via the (lower)
incomplete gamma function as
\begin{equation}
\Gamma(z) \approx \gamma(z,N) = z^{-1} N^z e^{-N} \,_1F_1(1,1+z,N).
\label{eq:gammahyper}
\end{equation}
Borwein \cite{Borwein1987} noted that applying fast multipoint evaluation
to a suitable truncation of the hypergeometric series in \eqref{eq:gammahyper}
allows evaluating the gamma function
of a fixed real or complex argument to $p$-bit precision
using $\OOsoft(p^{1.5})$ bit operations,
which is the best known result for general $z$ (if $z$ is algebraic,
binary splitting evaluation of the same series
achieves a complexity of $\OOsoft(p)$).

Let $t_k = N^k / (z (z+1) \cdots (z+k))$ and $s_n = \sum_{k=0}^n t_k$,
giving
$$\lim_{n\to\infty} s_n = \,_1F_1(1,1+z,N) / z.$$
For $z \in [1,2]$, choosing
$N \approx p \log 2$ and $n \approx (e \log 2) p$
gives an error of order $2^{-p}$ (it
is easy to compute strict bounds). The partial sums
satisfy the order-2 recurrence
\begin{equation}
\begin{pmatrix} s_k \\ t_{k+1} \end{pmatrix} =
\frac{M(k)}{q(k)} \frac{M(k-1)}{q(k-1)} \cdots \frac{M(0)}{q(0)}
\label{eq:gammahypproduct}
\begin{pmatrix} 0 \\ 1/z \end{pmatrix}
\end{equation}
where
\begin{equation}
M(k) =
\begin{pmatrix}
1+k+z & 1+k+z \\
0     & N     \\
\end{pmatrix}, \quad
q(k) = 1+k+z.
\end{equation}
The matrix product \eqref{eq:gammahypproduct} may be computed
using fast multipoint evaluation
or rectangular splitting. We note that the denominators
are identical to the top left entries of the numerator matrices,
and therefore do not need to be computed separately.

Figure \ref{fig:gammahyp} compares
the performance of the Stirling series (with fast argument reduction using rectangular splitting)
and three different implementations of the ${}_1F_1$ series (naive summation,
fast multipoint evaluation, and rectangular splitting
using Algorithm~\ref{alg:rsshift} with $m = 0.2 n^{0.4}$)
for evaluating $\Gamma(x)$ where $x$ is a real argument
close to unity.

\begin{figure}[width=8cm] \label{fig:gammahyp}
\begin{center}
\includegraphics[width=8cm]{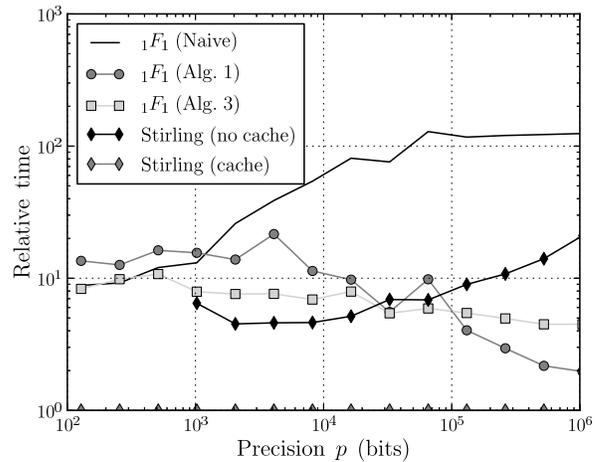}
\caption{Timings of gamma function algorithms,
normalized against the Stirling series with Bernoulli numbers cached.}
\end{center}
\end{figure}

Both fast multipoint evaluation and rectangular splitting
speed up the hypergeometric series
compared to naive summation.
Using either algorithm, the
hypergeometric series is competitive with the Stirling
series for a single evaluation
at precisions above roughly 10,000 decimal digits.

Algorithm~\ref{alg:multipoint} performs better than on
the rising factorial benchmark, and is
faster than Algorithm~\ref{alg:rsshift} above $10^5$ bits.
A possible explanation for this difference is that
the number of terms used in the hypergeometric series roughly is $n \approx 2p$
where $p$ is the precision in bits, compared to $n \approx p / 4$
for the rising factorial, and rectangular splitting
favors higher precision and fewer terms.

The speed of Algorithm~\ref{alg:rsshift} is remarkably close to
that of Algorithm~\ref{alg:multipoint} even for $p$
as large as $10^6$.
Despite being asymptotically slower, the
simplicity of rectangular splitting
combined with its lower memory consumption and better
numerical stability (in our implementation, Algorithm~\ref{alg:rsshift}
only loses a few significant digits,
while Algorithm~\ref{alg:multipoint} loses a few percent of the number of significant digits)
makes it an attractive option for extremely high-precision
evaluation of the gamma function.

Once the Bernoulli numbers have been cached after the first evaluation, the Stirling series
still has a clear advantage up to precisions exceeding $10^6$ bits.
We may remark that our implementation of the Stirling series
has been optimized for multiple evaluations: by choosing
larger rising factorials and generating the Bernoulli numbers dynamically
without storing them, both the speed and memory consumption
for a single evaluation could be improved.

\section{Discussion}

We have shown that rectangular splitting can be profitably
applied to evaluation of a general class of linearly recurrent sequences.
When used for numerical evaluation of special functions, our benchmark results
indicate that rectangular splitting can be faster than either naive evaluation
or fast multipoint evaluation over a wide precision range
(between approximately $10^3$ and $10^6$ bits).

Two natural questions are whether this approach can be generalized
further (to more general classes of sequences), and whether
it can be optimized further (perhaps for more specific classes
of sequences).

\section{Acknowledgements}

The author thanks Manuel Kauers for providing
useful feedback on draft versions of this paper.

\bibliographystyle{plain}
\bibliography{references.bib}

\end{document}